\documentclass[a4paper,UKenglish,autoref]{lipics-v2019}

\usepackage[ruled,vlined,linesnumbered,commentsnumbered]{algorithm2e}
\usepackage{xcolor}
\usepackage{hyperref}

\title{A $2\sqrt{2k}$-approximation algorithm for \\ minimum power {\em k} edge disjoint {\em st}-paths}

\titlerunning{A $2\sqrt{2k}$-approximation algorithm for minimum power {\em k} edge disjoint {\em st}-paths}

\author{Zeev Nutov}{The Open University of Israel}{nutov@openu.ac.il}
{https://orcid.org/0000-0002-6629-3243}{}
\authorrunning{Zeev Nutov}
\Copyright{Zeev Nutov}

\nolinenumbers

\ccsdesc[100]{Theory of computation~Design and analysis of algorithms}

\begin{document}

\maketitle

\keywords{edge-disjoint $st$-paths, minimum power, wireless networks, approximation algorithm}

\begin{abstract}
In minimum power network design problems we are given 
an undirected graph $G=(V,E)$ with edge costs $\{c_e:e \in E\}$.  
The goal is to find an edge set $F \subseteq E$ that satisfies a prescribed property of minimum
power $p_c(F)=\sum_{v \in V} \max \{c_e: e \in F \mbox{ is incident to } v\}$.
In the {\sc Min-Power $k$ Edge Disjoint $st$-Paths} problem
$F$ should contain $k$ edge disjoint $st$-paths.
The problem admits a $k$-approximation algorithm, and it was an  open question 
to achieve an approximation ratio sublinear in $k$ even for unit costs. 
We give a $2\sqrt{2k}$-approximation algorithm for general costs.
\end{abstract}

\section{Introduction} \label{s:intro}

In network design problems one seeks a cheap subgraph 
that satisfies a prescribed property, often determined by pairwise connectivities or degree demands.  
A traditional setting is when each edge has a cost, and we want to minimize the cost of the subgraph.  
This setting does not capture many wireless networks scenarios, where a communication between two nodes 
depends on our ''investment'' in these nodes, like equipment and transmission energy,
and the cost is the sum of these ``investments''.
This motivates the type of problems we study here.
Specifically, we consider assigning transmission ranges to the nodes 
of a static ad hoc wireless network so as to minimize the total power consumed,  
under the constraint that the bidirectional network 
established by the transmission ranges satisfies prescribed properties. 

More formally, in {\bf minimum power network design problems} we are given 
an undirected (simple) graph $G=(V,E)$ with (non-negative) edge costs $\{c_e \geq 0:e \in E\}$.
The goal is to find an edge subset $F \subseteq E$ of minimum
total {\bf power} $p_c(F)=\sum_{v \in V} \max \{c_e: e \in \delta_F(v)\}$ that satisfies a prescribed property;
here $\delta_F(v)$ denotes the set of edges in $F$ incident to $v$, and a maximum taken over an empty set is assumed to be zero.
Equivalently, we seek an {\bf assignment}  $\{a_v \geq 0:v \in V\}$ to the nodes of minimum total value $\sum_{v \in V} a_v$, 
such that the {\bf activated edge set} $\{e=uv \in E: c_e \leq \min\{a_u,a_v\}\}$ satisfies the prescribed property. 
These problems were studied already in the late 90's, cf.~\cite{SRS,WNE,RM,KKKP,CH}, followed by many more. 
Min-power problems were also widely studied in directed graphs, 
usually under the assumption that to activate an edge one needs to assign power only to its tail, 
while heads are assigned power zero, cf.~\cite{KKKP,SM,N-powcov,N-sur}. 
The undirected case has an additional requirement -- we want the network to be bidirected,
to allow a bidirectional communication.

In the traditional edge-costs scenario, 
a fundamental problem in network design is the {\sc Shortest $st$-Path} problem.
A natural generalization and the simplest high connectivity network design problem 
is finding a set of $k$ disjoint $st$-paths of minimum edge cost. 
Here the paths may be edge disjoint -- the {\sc $k$ Edge Disjoint $st$-Paths} problem, 
or internally (node) disjoint -- the {\sc $k$ Disjoint $st$-Paths}  problem.
Both problems can be reduced to the 
{\sc Min-Cost $k$-Flow} problem, which has a polynomial time algorithm.   

Similarly, one of the most fundamental problems in the min-power setting is the {\sc Min-Power $st$-Path} problem. 
For this problem, an elegant linear time reduction  to the ordinary {\sc Shortest $st$-Path} problem is described by Althaus et al. \cite{ACMP}. 
Lando and Nutov \cite{LN} suggested a more general (but less time efficient) ''levels reduction'' 
that converts several min-power problems into problems with node costs.
A fundamental generalization is activating a set of $k$ edge disjoint or internally disjoint $st$-paths. 
Formally, the edge disjoint $st$-paths version is as follows. 

\begin{center} \fbox{\begin{minipage}{0.97\textwidth} \noindent
{\sc Min-Power $k$ Edge Disjoint $st$-Paths} ({\sc Min-Power $k$-EDP}) \\
{\em Input:} \ A (simple) graph $G=(V,E)$ with edge costs $\{c_e \geq 0:e \in E\}$, $s,t \in V$, 
and a positive integer $k$. \\
{\em Output:}   An edge set $F \subseteq E$ such that 
the graph $(V,F)$ contains $k$ edge disjoint $st$-paths 
of minimum power $p_c(F)=\sum_{v \in V} \max \{c_e: e \in \delta_F(v)\}$.
\end{minipage}}\end{center}

A related problem is the {\sc Node Weighted $k$ Edge Disjoint $st$-Paths} problem (a.k.a. {\sc Node-Weighted $k$-Flow}), 
where instead of edge costs we have node weights, and seek a min-weight subgraph that contains 
$k$ edge disjoint $st$-paths. For unit weights, 
this problem is equivalent to {\sc Min-Power $k$-EDP} with unit edge costs --
in both problems we seek a subgraph with minimum number of nodes that contains 
$k$ edge disjoint $st$-paths.
For {\sc Min-Power $k$-EDP} Lando and Nutov \cite{LN} obtained a $k$-approximation algorithm, improving over the easy ratio $2k$; 
they gave a polynomial time algorithm for the 
particular case of {\sc Min-Power $k$-EDP} when $G$ has a subgraph $G_0$ of cost $0$ that already contains $k-1$ edge disjoint 
$st$-paths, and we seek a min-power augmenting edge set $F$ to increase the number of paths by~$1$. 
On the other hand \cite{N-snw} shows that ratio $\rho$ for {\sc Min-Power $k$-EDP} with unit costs 
implies ratio $2\rho^2$ for the {\sc Densest $\ell$-Subgraph} problem,
that currently has best known ratio $O(n^{1/4+\epsilon})$ \cite{BCCF} 
and approximation threshold $\Omega\left(n^{1/poly (\log \log n)}\right)$ \cite{M}.
{\sc Min-Power $k$-EDP} was also studied earlier on directed graphs 
in many papers with similar results, cf.~\cite{SM,MMW,HKMN,LN}.
Speci\-fically, Hajiaghayi et al.~\cite{HKMN} showed that directed {\sc Min-Power $k$-EDP} is {\sc Label-Cover} hard,
while Maier, Mecke and Wagner \cite{MMW} showed that a natural algorithm for unit costs
has approximation ratio at least $2\sqrt{k}$.

Summarizing, even for unit costs, the best known approximation ratio for directed/undirected {\sc Min-Power $k$-EDP} was $k$, 
and the problem is unlikely to admit a polylogarithmic ratio.  
Open questions were to resolve the complexity status for $k=2$ \cite{AE,N-sur}, 
and to obtain an approximation ratio sublinear in $k$ (even just for unit activation costs) \cite{MMW,N-snw,N-sur}.  

The {\sc Min-Power $k$-EDP} problem is closely related to the {\sc Fixed Cost $k$-Flow} problem, 
in which each edge has a fixed capacity and cost, and the goal is to buy  
the cheapest set of edges to ensure an $st$-flow of value $k$. 
In this context, Hajiaghayi et al.~\cite{HKKN} studied the undirected {\sc Bipartite Fixed Cost $k$-Flow} problem,
where the graph $G \setminus \{s,t\}$ is bipartite, and $s$ is connected to one part and $t$ to the other part by edges of infinite
capacity and cost $1$, and the other edges have cost $0$ and capacity $1$. 
They gave for this problem an $O(\sqrt{k \ln k})$-approximation algorithm.
One can see that this problem is a very restricted case of {\sc Min-Power $k$-EDP} with $0,1$ costs. 
In fact, it was an open question to obtain approximation ratio sublinear in $k$ for {\sc Min-Power $k$-EDP}
even for $0,1$ costs \cite{MMW,N-snw,N-sur}. 
Our next result resolves this open question. 

\begin{theorem} \label{t2}
{\sc Min-Power $k$-EDP} admits a $2\sqrt{2k}$-approximation algorithm.
\end{theorem}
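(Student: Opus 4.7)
The plan is to combine two algorithmic components and return the better of the two outputs, balanced via a threshold $L = \lceil \sqrt{2k}\rceil$.

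The first component applies the Lando--Nutov augmentation algorithm iteratively: starting from $F_0 = \emptyset$, at each step $i = 1, \ldots, k$, compute a minimum-power augmenting edge set $A_i$ such that $F_{i-1} \cup A_i$ contains $i$ edge-disjoint $st$-paths, and set $F_i = F_{i-1} \cup A_i$. Two bounds on $p_c(A_i)$ are available in the analysis: the trivial bound $p_c(A_i) \leq \mathrm{opt}$ (since $F^*$ itself is a feasible augmenting set, as $F^* \cup F_{i-1}$ already contains $k \geq i$ edge-disjoint $st$-paths), and a refined averaging bound $p_c(A_i) \leq 2c(F^*)/(k-i+1)$. The refined bound uses Menger's theorem to exhibit $k - i + 1$ edge-disjoint augmenting paths $P_1, \ldots, P_{k-i+1}$ in $F^* \cup F_{i-1}$ whose ``new'' portions lie in $F^*$ and are edge-disjoint; their total power is at most $2c(F^*)$, since for each vertex $v$ we have $\sum_j \max_{e \in \delta_{P_j}(v)} c_e \leq \sum_j \sum_{e \in \delta_{P_j}(v)} c_e \leq \sum_{e \in \delta_{F^*}(v)} c_e$, and summing over $v$ gives $\sum_j p_c(P_j) \leq 2c(F^*)$; averaging then yields the claimed bound on the minimum, hence on what Lando--Nutov achieves.

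The second component is min-cost $k$-flow, which produces a $k$-EDP $\hat{F}$ of minimum total edge cost; the standard inequality $p_c(\hat{F}) \leq 2 c(\hat{F}) \leq 2 c(F^*)$ then applies.

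The analysis then splits on the ratio $c(F^*)/\mathrm{opt}$: when this ratio is below $\sqrt{k/2}$, the min-cost flow output alone achieves $p_c(\hat{F}) \leq 2\sqrt{2k} \cdot \mathrm{opt}$; otherwise, the iterative phase is truncated at step $L$ (yielding $p_c(F_L) \leq L \cdot \mathrm{opt} \leq \sqrt{2k} \cdot \mathrm{opt}$ via subadditivity of $p_c$) and extended to a $k$-EDP via min-cost $(k - L)$-flow in the residual graph, where the refined averaging argument controls the extension's power. Tuning $L$ around $\sqrt{2k}$ balances the two sources of cost and yields the claimed $2\sqrt{2k}$ ratio. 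The main technical obstacle is preventing an extra logarithmic factor from appearing when summing the refined per-step bounds over many iterations; this requires either a clean switch between the two regimes at step $L$ or a unified averaging argument that controls the total power across all iterations simultaneously without incurring the harmonic-sum loss that naive summation would produce.
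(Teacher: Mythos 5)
There is a genuine gap, and it sits exactly where the paper's main contribution lies. Your case analysis splits on the ratio $c(F^*)/\mathrm{opt}$, and the branch $c(F^*) \geq \sqrt{k/2}\cdot\mathrm{opt}$ is never closed: every tool you invoke there is proportional to $c(F^*)$, which in that branch you only know to be \emph{large}. The refined averaging bound $p_c(A_i)\le 2c(F^*)/(k-i+1)$ gets worse, not better, as $c(F^*)$ grows, and the power of the min-cost $(k-L)$-flow extension in the residual graph is again only bounded by $2c(F^*)$. Without an a priori upper bound on $c(F^*)$ in terms of $\mathrm{opt}=p_c(F^*)$, no combination of these pieces yields a ratio depending only on $k$: the only generic inequality in this direction is $c(G)\le\sqrt{|E|/2}\cdot p_c(G)$ (Hajiaghayi et al.), whose right-hand side depends on $|E|$, not on $k$. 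So the ``main technical obstacle'' you flag (the harmonic-sum loss) is not the real obstacle; even granting a clean switch between regimes, the second regime carries no bound at all.

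The missing ingredient is precisely Theorem~\ref{t3}: for an inclusion-minimal simple subgraph containing $k$ edge-disjoint $st$-paths, $c(G)\le\sqrt{2k}\cdot p_c(G)$. This is an extremal, not an algorithmic, statement; the paper proves it by ordering the nodes along a nested chain of minimum $st$-cuts and counting edges by their length in that order (Lemmas~\ref{l:order} and~\ref{l:bound}). Once you have it, your two-phase machinery collapses to your ``second component'' alone: take $F^*$ inclusion-minimal without loss of generality, compute a min-cost set $\hat F$ of $k$ edge-disjoint $st$-paths by min-cost $k$-flow, and chain $p_c(\hat F)\le 2c(\hat F)\le 2c(F^*)\le 2\sqrt{2k}\cdot p_c(F^*)$. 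The iterative Lando--Nutov phase is unnecessary; conversely, without Theorem~\ref{t3} (or Lemma~\ref{l:edp}) your argument does not establish any ratio sublinear in $k$.
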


Theorem~\ref{t2} is based on the following combinatorial result that is of independent interest. 
Given a graph $G=(V,E)$ with edge costs $c_e$, 
we denote by $c(G) = c(E)=\sum_{e \in E} c_e$ the ordinary cost of $G$,
and by $p_c(G)=p_c(E)=\sum_{v \in V} \max\{c_e:e \in \delta_E(v)\}$ the power cost of $G$.
It is easy to see that $p_c(G) \leq 2c(G)$. 
Hajiaghayi et al.~\cite{HKMN} proved that $c(G) \leq \sqrt{|E|/2} \cdot p_c(G)$ for any graph $G$; this bound is tight.
We will improve over this bound for inclusion minimal simple graphs that contain $k$ edge disjoint $st$-paths.
Let us say that a graph $G$ is {\bf minimally $k$-$st$-edge-connected} if 
$G$ contains $k$ edge disjoint $st$-paths but no proper subgraph
of $G$ contains $k$ edge disjoint $st$-paths.

\begin{theorem} \label{t3}
Let $G=(V,E)$ be a minimally $k$-$st$-edge-connected  simple graph 
with non-negative edge costs $\{c_e:e \in E\}$.
Then $c(G) \leq \sqrt{2k} \cdot p_c(G)$. 
\end{theorem}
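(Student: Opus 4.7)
The plan is to reduce $c(G) \leq \sqrt{2k}\, p_c(G)$ to a purely combinatorial density bound on induced subgraphs of $G$, mirroring the Hajiaghayi et al.\ argument but aiming for the sharper constant $\sqrt{2k}$ in place of $\sqrt{|E|/2}$. The sharper constant will come from the structural fact that the edges of a minimally $k$-$st$-edge-connected graph decompose into $k$ edge-disjoint $st$-paths.

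\textbf{Step 1 (reduction to a density bound).} Set $c^*_v := \max\{c_e : e \in \delta_E(v)\}$, so $p_c(G) = \sum_v c^*_v$. Order the vertices $v_1,\dots,v_n$ so that $a_j := c^*_{v_j}$ is non-increasing and put $V_j := \{v_1,\dots,v_j\}$, $D_j := |E(G[V_j])|$. For every edge $e = v_iv_j \in E$ with $i<j$ we have $c_e \leq a_j$, and a standard Abel summation gives
\[
c(G) \;\leq\; \sum_{j} a_j\,|\{i<j:v_iv_j\in E\}| \;=\; \sum_{j} (a_j-a_{j+1})\,D_j .
\]
So it is enough to prove the density bound
\begin{equation*}
|E(G[S])| \;\leq\; \sqrt{2k}\,|S| \qquad \text{for every } S\subseteq V, \tag{$\star$}
\end{equation*}
since plugging $D_j\leq\sqrt{2k}\,j$ into the identity above and applying Abel once more yields $c(G)\leq \sqrt{2k}\,\sum_j(a_j-a_{j+1})\,j = \sqrt{2k}\,p_c(G)$.

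\textbf{Step 2 (path decomposition from minimality).} Fix any $k$ edge-disjoint $st$-paths $P_1,\dots,P_k$ in $G$. Their union is a $k$-$st$-edge-connected subgraph of $G$, so by minimality it equals $E$. Fix $S\subseteq V$ with $|S|=a$ and, for each $i$, write $m_i := |V(P_i)\cap S|$, $r_i := $ number of maximal runs of consecutive $S$-vertices along $P_i$, and $e_i := |E(P_i)\cap E(G[S])|$. Since each $P_i$ is simple, $e_i = m_i-r_i \leq m_i-1$ whenever $m_i\geq 1$, and $\sum_i e_i = |E(G[S])|$.

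\textbf{Step 3 (the key double count).} The heart of the argument is the pairwise bound: for any two distinct $u,v\in S$, at most $k$ of the paths contain both $u$ and $v$ (there are only $k$ paths), so
\[
\sum_{i}m_i(m_i-1) \;=\; \sum_{u\ne v\in S}|\{i:u,v\in V(P_i)\}| \;\leq\; k\cdot a(a-1) \;=\; 2k\binom{a}{2}.
\]
Combining this with $|E(G[S])|=\sum_i e_i \leq \sum_i(m_i-1)$ by a Cauchy--Schwarz-style estimate, one aims at $|E(G[S])|^2\leq 2k\binom{a}{2}$, giving $(\star)$.

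\textbf{Main obstacle.} The delicate point is Step 3: a naive Cauchy--Schwarz of the form $(\sum_i(m_i-1))^2\leq k\sum_i(m_i-1)^2\leq k\sum_i m_i(m_i-1)$ yields only $|E(G[S])|\leq k\sqrt{a(a-1)}\leq ka$, losing a factor of $\sqrt{k/2}$. Closing this gap requires exploiting more of the path structure beyond the $d_{u,v}\leq k$ bound, for instance by using that the $r_i$ runs of $P_i$ enter and leave $S$ through boundary vertices of $S$, so that the runs edge-decompose $G[S]$ into simple paths whose endpoints are constrained by $|\delta(S)|$, and combining this constraint with the simplicity of $G$. This is where the minimality hypothesis has to be deployed most carefully, and is the main technical hurdle of the proof; the remainder is routine.
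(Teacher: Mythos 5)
Your Step 1 is sound and is essentially the paper's own reduction: the paper proves $\sum_{xy\in E}\min\{p(x),p(y)\}\leq \sqrt{2k}\sum_{v}p(v)$ by induction on the number of distinct weight values, which is the same layer-peeling as your Abel summation, and your density bound $(\star)$ is precisely the paper's Lemma~\ref{l:edp} ($|E_U|\le\sqrt{2k}\,|U|$ for every $U\subseteq V$). The trouble is that $(\star)$ is the entire content of the theorem, and your proposal does not prove it. As you yourself note, the double count in Step 3 uses only the fact that there are $k$ paths in total (the bound ``at most $k$ paths contain both $u$ and $v$'' is vacuous), so Cauchy--Schwarz returns the trivial estimate $|E(G[S])|\le k|S|$; at that point neither minimality nor simplicity has actually been exploited, and no concrete mechanism is offered for recovering the missing factor $\sqrt{k/2}$. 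The closing remark about runs entering and leaving $S$ through $\delta(S)$ is a direction, not an argument, so the main technical hurdle is named but not crossed. This is a genuine gap.

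For comparison, the paper's route to $(\star)$ rests on a structural ordering rather than on a pairwise double count. Orient the $k$ edge-disjoint $st$-paths from $s$ to $t$; minimality forces the resulting digraph to be acyclic and every edge to lie on some minimum $st$-cut. One then shows, by repeatedly taking a source $z\notin\{s,t\}$ of $G\setminus\{s,t\}$ and contracting it into $s$ while preserving minimality, that there is an ordering $s=v_1,\dots,v_n=t$ in which every prefix $C_i=\{v_1,\dots,v_i\}$ is a minimum $st$-cut with no entering edges. Defining the length of an edge $v_iv_j$ ($i<j$) as $j-i$, the total length of $E$ equals $\sum_{i=1}^{n-1}d^{\sf out}(C_i)\le k(n-1)<kn$, and since the graph is simple there are at most $n-j$ edges of length $j$; a greedy packing computation then shows that any simple edge set of total length at most $kn$ has at most $\sqrt{2k}\,n$ edges. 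Restricting the ordering to $U$ gives the induced-subgraph version. If you want to rescue your path-based Step 3, you would need some substitute for this ``every prefix is a minimum cut'' structure; without an ordering of this kind, the constraints you have on an arbitrary $S$ are too weak to beat the factor $k$.
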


The $2\sqrt{2k}$-approximation algorithm will simply compute a minimum cost set of $k$ edge disjoint $st$-paths,
with edge costs $c_e$; this can be done in polynomial time using a min-cost $k$-flow algorithm.
The approximation ratio $2\sqrt{2k}$ follows from the Theorem~\ref{t3} bound,   
and the bound $p_c(G) \leq 2c(G)$; specifically, if 
$F^*$ is an optimal solution to {\sc Min-Power $k$-EDP} and 
$F$ is a min-cost set of $k$ edge disjoint paths, then  
$p_c(F) \leq 2c(F) \leq 2c(F^*) \leq 2\sqrt{2k} \cdot p_c(F^*)$.

\medskip

A large part of research in extremal graph theory deals with determining 
the maximal number of edges in multigraphs and simple graphs that are edge-minimal w.r.t. some specified property.
This question was widely studies for $k$-connected and $k$-edge-connected graphs. 
It is easy to show that a minimally $k$-edge-connected graph $G$ on $n$ nodes can be decomposed into $k$ forests 
and thus has at most $k(n -1)$ edges; a graph obtained from a spanning tree 
by replacing each edge by $k$ parallel edges shows that this bound is tight. 
Mader \cite{Mad} improved this bound for simple graphs -- 
in this case $G$ has at most $k(n-k)$ edges provided that $n \geq 3k -2$, 
and the complete bipartite graph $K_{k,n-k}$ shows that this bound is tight.
Mader \cite{Mad2} also showed that this bound holds for minimally $k$-connected graphs.
When $n < 3k-2$, a minimally $k$-edge-connected or $k$-connected simple graph 
has at most $\frac{1}{8}(n+k)^2$ edges, see \cite{Cai,BCW,Cai2}.
Similar bounds were established in \cite{N-ext} for graphs that are $k$-connected 
from a given root node to every other node (a.k.a. $k$-out-connected graphs).

To see that Theorem~\ref{t3} is indeed of interest, consider the case of unit costs.
Then $c(G)=|E|$ and $p_c(G)=V$. 
Thus already for this simple case we get the following fundamental extremal graph theory result, 
that to the best of our knowledge was not known before. 

\begin{corollary} \label{c:edp}
Let $G=(V,E)$ be a minimally $k$-$st$-edge-connected (directed or undirected) simple graph. 
Then $|E| \leq \sqrt{2k} \cdot |V|$.
\end{corollary}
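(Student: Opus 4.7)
The plan is to exploit the BFS layering of $G$ from $s$: let $L_0 = \{s\}, L_1, \ldots, L_d$ be the BFS levels of $G$, where $d = d_G(s,t)$ is the $st$-distance (so $t \in L_d$). My first goal is the product inequality $|L_i| \cdot |L_{i+1}| \geq k$ for each $0 \leq i \leq d-1$. Indeed, the edge set $E(L_i, L_{i+1})$ is exactly the $st$-cut $\delta_G(L_{\leq i})$ (BFS forbids edges skipping a layer, and within-layer edges do not cross this cut), so by Menger $|E(L_i, L_{i+1})| \geq k$; together with $|E(L_i, L_{i+1})| \leq |L_i| \cdot |L_{i+1}|$ (from simplicity), this yields the product bound.

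Applying AM-GM, $|L_i| + |L_{i+1}| \geq 2\sqrt{k}$. Summing over $i = 0, \ldots, d-1$, each $|L_j|$ with $1 \leq j \leq d-1$ is counted twice while $|L_0| = 1$ and $|L_d| \geq 1$ are counted once, so $2|V| - |L_0| - |L_d| \geq 2d\sqrt{k}$, yielding $|V| \geq d\sqrt{k} + 1$, equivalently $d \leq (|V|-1)/\sqrt{k}$.

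For the unit-cost Corollary~\ref{c:edp}, I would use that $G$ decomposes into exactly $k$ edge-disjoint simple $st$-paths covering all edges (by minimality), each of length at least $d$. In the clean case where every layer cut is the unique min cut and no within-layer edges exist, minimality forces $|E(L_i, L_{i+1})| = k$ for each $i$, giving $|E| = kd \leq \sqrt{k}(|V|-1) \leq \sqrt{2k}|V|$. For the general case (where within-layer edges can occur, being critical through non-layer min cuts), I would argue that these extra edges are absorbed into the slack of the AM-GM step, or alternatively use a symmetric BFS from $t$ to reinforce the vertex count.

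For the weighted Theorem~\ref{t3}, I would use a layer-cake (threshold) identity: $c(G) = \int_0^\infty |E_t|\,dt$ and $p_c(G) = \int_0^\infty |V(E_t)|\,dt$ with $E_t = \{e : c_e \geq t\}$, reducing the theorem to the subgraph-level claim $|F| \leq \sqrt{2k}|V(F)|$ for every $F \subseteq E$. This strengthening from $F = E$ to arbitrary $F$ is the main obstacle, since the BFS argument applies to $G$ rather than to its subgraphs. I expect it to follow from a direct inequality via the $k$-path decomposition combined with Cauchy-Schwarz: setting $x_v = |\{i : v \in V(P_i)\}|$, one has $(|F|+k)^2 \leq |V(F)| \cdot \sum_{v \in V(F)} x_v^2$, and the layer-product lemma together with simplicity should yield $\sum_v x_v^2 \leq 2k|V|$, giving the desired factor $\sqrt{2k}$.
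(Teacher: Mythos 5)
Your reduction of Theorem~\ref{t3} to a per-subgraph bound $|F|\le\sqrt{2k}\,|V(F)|$ via the threshold decomposition is sound and is essentially the paper's own route (the paper does the same discretely, by induction on the number of distinct weight values, using Lemma~\ref{l:edp}). Likewise the product bound $|L_i|\cdot|L_{i+1}|\ge k$ for consecutive BFS layers is correct. But the proof of the corollary itself has a genuine gap precisely where the difficulty lies: the BFS argument only bounds the number of layers, $d\le(|V|-1)/\sqrt{k}$, and says nothing about how many edges live inside layers or join vertices at varying ``distances''. Your ``clean case'' ($|E|=kd$, all edges crossing consecutive layer cuts) yields $|E|\le\sqrt{k}\,|V|$, which is \emph{stronger} than the true bound --- and the paper's tight example $G_\ell$ in Section~\ref{s3} shows why this case is not representative: there $|E|\approx 2kd$ and the edge density genuinely reaches $\sqrt{2k}$, driven exactly by the mixture of short and long edges that your clean case excludes. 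Asserting that these edges are ``absorbed into the slack of the AM-GM step'' is not an argument; quantifying that slack against the number of non-crossing edges is the whole content of the theorem. The paper's substitute is much finer than a BFS layering: it orients the $k$ paths, uses acyclicity of the resulting digraph to build a \emph{total order} $v_1,\dots,v_n$ in which \emph{every} prefix is a minimum $st$-cut with no entering edge (Lemma~\ref{l:order}), deduces the length budget $\sum_{e=v_iv_j}(j-i)\le kn$, and then solves the extremal problem of how many edges a simple graph on an ordered vertex set can carry under this budget (Lemma~\ref{l:bound}); the answer $\sqrt{2k}\,n$ comes from packing all edges of positional length up to roughly $\sqrt{2k}$. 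You have no analogue of this order, and the layer cuts alone do not supply one.

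The fallback for the weighted case is also unsubstantiated: the inequality $\sum_v x_v^2\le 2k|V|$ is essentially as hard as the statement you are trying to prove (it is tight on $G_\ell$, where internal vertices carry $x_v\approx\sqrt{2k}$ paths), and ``the layer-product lemma together with simplicity should yield'' it is not a derivation; moreover, for a proper subgraph $F$ the identity $\sum_{v\in V(F)}x_v=|F|+k$ that your Cauchy--Schwarz step implicitly relies on fails, since the $P_i$ are paths of $G$, not of $F$. In short: the skeleton (threshold decomposition, minimality forcing the $k$ paths to cover all edges, layer cuts being min cuts) is fine, but the extremal counting that turns ``every edge crosses a minimum cut'' into ``at most $\sqrt{2k}\,|V|$ edges'' is missing, and that is the heart of the result.
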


In the next section we will show that this bound is asymptotically tight.

We briefly survey some results on more general activation network design problems.
Here we are given a graph $G=(V,E)$ with a pair of 
activation costs $\{c_e^u,c_e^v\}$ for each $uv$-edge $e \in E$;
the goal is to find an edge subset $F \subseteq E$ of minimum activation cost $\tau(F)=\sum_{v \in V} \max \{c_e^v: e \in \delta_F(v)\}$
that satisfies a prescribed property. 
This generic problem was introduced by Panigrahi \cite{P}, and it includes 
node costs problems (when for every $v \in V$, the costs $c_e^v$ are identical for all edges $e$ incident to $v$), 
min-power problems (when $c_e^u=c_e^v$ for each $uv$-edge $e \in E$), 
and several other problems that arise in wireless networks; see a survey in \cite{N-sur} on this type of problems. 
 
In the {\sc Activation $k$ Disjoint $st$-Paths} problem, the $k$ paths should be internally node disjoint. 
This problem admits an easy $2$-approximation algorithm. 
Based on an idea of Srinivas and Modiano \cite{SM}, Alqahtani and Erlebach \cite{AE} showed that 
approximation ratio $\rho$ for {\sc Activation $2$ Disjoint $st$-Paths} implies 
approximation ratio $\rho$ for {\sc Activation $2$ Edge Disjoint $st$-Paths}.
In \cite{AE} it is also claimed that {\sc Activation $2$ Disjoint Paths} admits a $1.5$-approximation algorithm, 
but the proof was found to contain an error -- see \cite{Thomas} where also a different $1.5$-approximation algorithm is given.
In another paper, Alqahtani and Erlebach \cite{AE-tw} showed 
that the problem is polynomially solvable on graphs with bounded treewidth.
However, it is a long standing open question whether {\sc Activation $2$ Disjoint $st$-Paths} is in P 
or is NP-hard on general graphs, even for power costs \cite{LN,AE,N-sur}.

\section{Proof of Theorem~\ref{t3} and a tight example} \label{s3}
 
Let $G=(V,E)$ be a simple minimally $k$-$st$-edge-connected graph.
Theorem~\ref{t3} says that then $c(G) \leq \sqrt{2k} \cdot p_c(G)$ for any edge costs $\{c_e:e \in E\}$, 
where $c(G)=\sum_{e \in E} c_e$ and $p_c(G)=\sum_{v \in V} \max\{c_e:e \in \delta(v)\}$
(recall that $\delta(v)$ denotes the set of edges in $E$  incident to $v$). 
As was mentioned in Corollary~\ref{c:edp}, in the case of uniform costs this reduces to $|E| \leq \sqrt{2k} \cdot |V|$. 
We will need a slight generalization of this bound to any subset $U$ of $V$, as follows. 

\begin{lemma} \label{l:edp}
Let $G=(V,E)$ be a simple minimally $k$-$st$-connected graph.
Then for any $U \subseteq V$,  $|E_U| \leq \sqrt{2k} \cdot |U|$,
where $E_U$ is the set of edges in $E$ with both ends in $U$.
\end{lemma}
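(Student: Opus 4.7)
My plan is to work with a fixed decomposition of $E$ into $k$ edge-disjoint $st$-paths $P_1,\ldots,P_k$, which exists by Menger's theorem. By inclusion-minimality every edge of $G$ lies on exactly one $P_i$: otherwise an edge on none of the $P_i$'s could be removed while preserving the $k$ edge-disjoint $st$-paths, contradicting minimality. Consequently, letting $\alpha_v$ denote the number of paths passing through $v$, we have $\alpha_s=\alpha_t=k$ and $\deg_G(v)=2\alpha_v\le 2k$ for every internal vertex $v\notin\{s,t\}$.

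Fix $U\subseteq V$. Each path $P_i$ meets $U$ in $p_i(U)$ maximal subpaths totaling $|V(P_i)\cap U|$ vertices, contributing $|V(P_i)\cap U|-p_i(U)$ edges to $E_U$. Summing over $i$ gives the identity
\[
|E_U| \;=\; \sum_{v\in U}\alpha_v \;-\; P(U),\qquad P(U):=\sum_i p_i(U).
\]
Together with the simple-graph bound $|E_U|\le\binom{|U|}{2}$, this already handles the regime $|U|\le 2\sqrt{2k}+1$ immediately, since $\binom{|U|}{2}\le\sqrt{2k}\cdot|U|$ there.

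For larger $U$, my plan is to combine the identity above with the estimate $\sum_{v\in U}\alpha_v\le k|U|$ coming from $\alpha_v\le k$, and with a lower bound on $P(U)$ derived from the cut size $|\partial U|$ and the $k$-$st$-edge-connectivity of $G$: each boundary crossing of a path contributes to $P(U)$, and $|\partial U|\ge k$ whenever $U$ separates $s$ from $t$. The main obstacle is the intermediate range $2\sqrt{2k}<|U|<2k$, where neither $\binom{|U|}{2}$ nor the naive path bound $k(|U|-1)$ is below $\sqrt{2k}\cdot|U|$. Closing this gap to the sharp constant $\sqrt{2k}$ is the heart of the argument and likely requires a case split or a convexity/AM--GM step exploiting both the edge-disjointness of the $P_i$'s and the simplicity of $G$ jointly—balancing the vertex-incidence count $\sum_{v\in U}\alpha_v$ against the run count $P(U)$ in a way that interpolates between the two regimes.
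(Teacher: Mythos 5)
Your setup is sound as far as it goes: by minimality every edge lies on exactly one of the $k$ edge-disjoint paths, the identity $|E_U|=\sum_{v\in U}\alpha_v-P(U)$ is correct, and the range $|U|\le 2\sqrt{2k}+1$ follows from simplicity alone. But the proposal then explicitly leaves the main case open, and the framework you have built cannot close it. The clearest symptom is $U=V$: there $P(V)=k$ and $\sum_{v\in V}\alpha_v=|E|+k$, so your identity reads $|E|=|E|$ --- it is a tautology and gives no leverage. The bound $\alpha_v\le k$ only reproduces the trivial estimate $|E|\le k(n-1)$, and lower-bounding $P(U)$ via boundary crossings cannot help: what you would need is $P(U)\ge\sum_{v\in U}\alpha_v-\sqrt{2k}\,|U|$, a quantity that can be of order $k|U|$, and establishing it is exactly as hard as the lemma itself. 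The missing ingredient is structural, not a convexity or AM--GM step.

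The paper's proof supplies precisely this structure. It orients the $k$ paths from $s$ to $t$ and shows, using minimality (the oriented graph is acyclic and every edge lies in some minimum $st$-cut), that $V$ admits an ordering $s=v_1,\ldots,v_n=t$ in which every prefix $C_i=\{v_1,\ldots,v_i\}$ is a minimum $st$-cut with no entering edge. Hence each of the $n-1$ prefix cuts is crossed by exactly $k$ edges, so the total ``length'' $\sum_{v_iv_j\in E}(j-i)=\sum_i d^{\sf out}(C_i)\le k(n-1)$. An extremal count over simple graphs (the densest configuration with a given length budget takes all $n-j$ edges of each length $j$ for small $j$) converts the budget $kn$ into the bound $|E|\le\sqrt{2k}\,n$, and the argument restricts verbatim to the induced ordered subsequence of any $U\subseteq V$, which is how the lemma's subset form is obtained. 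To salvage your approach you would need an analogue of this global ordering/cut structure; the local quantities $\alpha_v$ and $P(U)$ do not encode it.
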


This lemma will be proved in the next section. 
For now, we will use Lemma~\ref{l:edp} to prove Theorem~\ref{t3}, namely, that $c(G) \leq \sqrt{2k} \cdot p_c(G)$.
Note that $c(G) \leq \sum_{xy \in E} \min\{p_c(x),p_c(y)\}$, where $p_c(v)=\max\{c(e):e \in \delta(v)\}$.  
Thus, it is sufficient to prove that for any non-negative weights $\{p(v):v \in V\}$ on the nodes, 
the following holds:
\begin{equation} \label{e:p}
\sum_{xy \in E} \min\{p(x),p(y)\} \leq \sqrt{2k} \cdot \sum_{v \in V} p(v) \ .
\end{equation}

The proof of (\ref{e:p}) is by induction on the number $N$ of distinct $p(v)$ values. 
In the base case $N=1$, all weights $p(v)$ are equal, and w.l.o.g. $p(v)=1$ for all $v \in V$. 
Then (\ref{e:p}) reduces to $|E| \leq \sqrt{2k} \cdot |V|$, which is the case $U=V$ in Lemma~\ref{l:edp}. 

Assume that $N \geq 2$.
Let $U=\{u \in V:p(u)=\max_{v \in V} p(v)\}$ and let $E_U$ be the set of edges with both ends in $U$.
Let $\epsilon$ be the difference between the maximum weight $\max_{v \in V} p(v)$ and the second maximum weight. 
Let $p'$ be defined by $p'(v)=p(v)-\epsilon$ if $v \in U$ and $p'(v)=p(v)$ otherwise. 
Note that $|E_U| \leq \sqrt{2k} \cdot |U|$, by Lemma~\ref{l:edp}, and that $p'$ has exactly $N-1$ distinct values,
so by the induction hypothesis, (\ref{e:p}) holds for $p'$.  Thus we have:
\begin{eqnarray*}
\sum_{xy \in E} \min\{p(x),p(y)\} & =    & \sum_{xy \in E} \min\{p'(x),p'(y)\}+\epsilon |E_U|  \\
                                                     & \leq & \sqrt{2k} \sum_{v \in V} p'(v)+\epsilon \cdot \sqrt{2k}|U|  \\
																										 & =     & \sqrt{2k} \left(\sum_{v \in V} p'(v)+\epsilon|U|\right)=\sqrt{2k} \sum_{v \in V} p(v) \ .
\end{eqnarray*}
The first and last equalities are by the definition of $p'(v)$. 
The inequality is by the induction hypothesis and Lemma~\ref{l:edp}. 
This concludes the proof of Theorem~\ref{t3}, provided that we will prove Lemma~\ref{l:edp},
which we will do in the next section. 

\begin{figure}
\centering 
\includegraphics{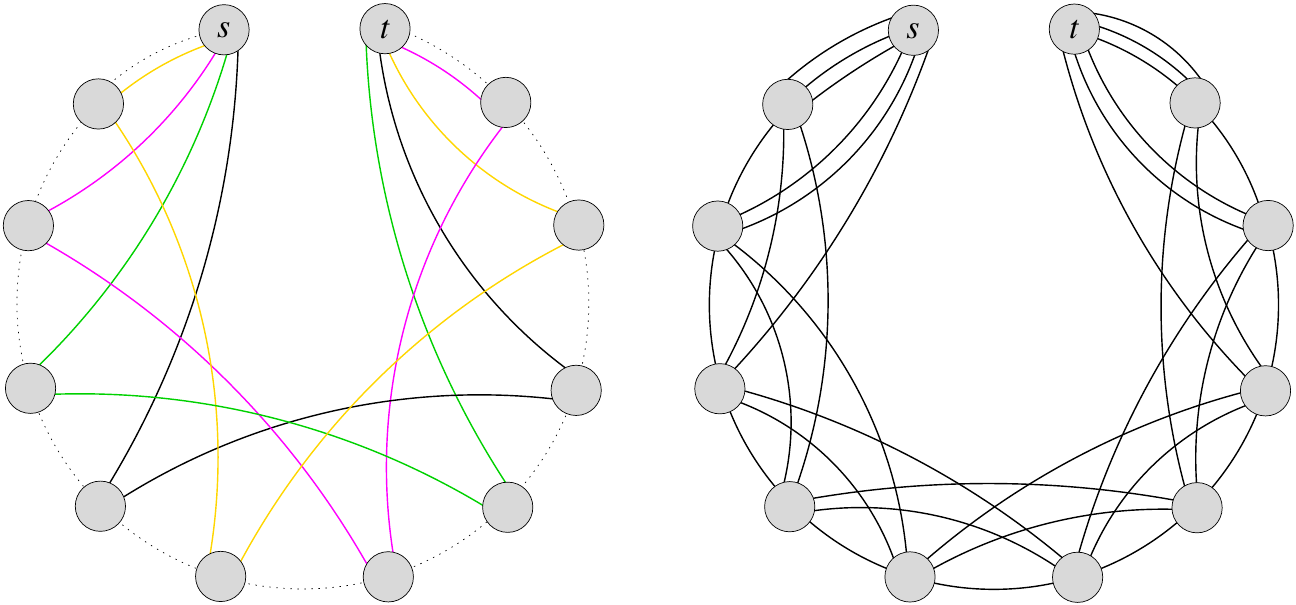}
\caption{$4$ disjoint $st$-paths in $F_4$ and the graph $G_3$.}
\label{f:tight}
\end{figure}

Before proving Lemma~\ref{l:edp}, let us give an example showing 
that for $U=V$ the bound in the lemma is asymptotically tight. 
Let $\ell,q$ be integer parameters, where $1 \leq \ell \leq q/2$. 
Let $P$ be an $st$-path with $q$ internal nodes.
The length of an edge that connects two nodes $u,v$ of $P$ is the distance between them in $P$.
Construct a graph $G_\ell$ as follows, see Fig.~\ref{f:tight}.

\begin{enumerate}
\item
Let $F_\ell$ 
consist of edges of length $\ell$ and edges incident to one of $s,t$ of length at most $\ell-1$. 
It is easy to see that $|F_\ell|=q+\ell$ and that $F_\ell$ is a union 
of $\ell$ edge disjoint (in fact, $\ell$ node internally disjoint) $st$-paths. 
\item
Let $G_\ell=(V,E_\ell)$ where $E_\ell$ is a disjoint union of $F_1,\ldots F_\ell$. 
One can verify that 
$G_\ell$ is minimally $k$-$st$-connected for $k=\ell(\ell+1)/2$ 
and that $|E_\ell|=q\ell+\ell(\ell+1)/2$.
\item
The graph $G_\ell$ is not simple, due to edges incident to $s$ and to $t$.
To make it a simple graph, subdivide $2k-\ell=\ell^2$ edges 
among the $2k$ edges incident to each of $s$ and $t$; 
this adds $\ell^2$ nodes and $\ell^2$ edges.
\end{enumerate}
Summarizing, the constructed graph $G_\ell$ has the following parameters:
\begin{eqnarray*}
k  & = & \ell(\ell+1)/2  \\ 
m & = & q\ell+\ell(\ell+1)/2+\ell^2 \\ 
n  & = & q+\ell^2+2
\end{eqnarray*}
For $\ell=\lceil 1/\epsilon\rceil$ and $q=\ell^4-\ell^2$, $m/n$ can be arbitrarily close to $\ell+1 > \sqrt{2k}$:
$$
\frac{m}{n} = 
\frac{q\ell+\ell(\ell+1)/2+\ell^2}{q+\ell^2+2} > 
\frac{q\ell}{q+\ell^2}=
\frac{\ell^2-1}{\ell} =
\frac{(\ell-1)(\ell+1)}{\ell} \geq (1-\epsilon)(\ell+1) \ .
$$

\section{Proof of Lemma~\ref{l:edp}} \label{s:lemma}

We prove Lemma~\ref{l:edp} for the {\em directed} graph obtained 
by orienting $k$ edge-disjoint $st$-paths in $G$ from $s$ to $t$;
note that this directed graph is also minimally $k$-$st$-edge-connected.    
For $R \subseteq V$ let $d_G^{\sf in}(R)$ and $d_G^{\sf out}(R)$ 
be the number of edges that enter and leave $R$, respectively.

\begin{lemma} \label{l:order}
Let $G=(V,E)$ be a minimally $k$-$st$-connected directed graph.
There exists an ordering $s=v_1, \ldots, v_n=t$ of \ $V$ such that for every $i \le n-1$, 
$C_i=\{v_1, \ldots v_i\}$ is a minimum $st$-cut and no edge enters $C_i$. 
\end{lemma}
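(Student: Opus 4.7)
The plan is to build the ordering greedily from $s$: given a prefix $C_i = \{v_1, \ldots, v_i\}$ that is a minimum $st$-cut with no entering edges, I will choose $v_{i+1}$ to be a suitably picked source in the induced subgraph on $V \setminus C_i$. The key structural facts I need to establish about the directed graph $G$ are (i) every minimum $st$-cut has no entering edges, and (ii) $G$ is acyclic; once these are in hand, the rest of the construction is a routine source-peeling induction.

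First I would record the basic parameters of $G$ that follow from its decomposition into $k$ edge-disjoint directed $st$-paths: $d^{\sf out}(s) = k$ and $d^{\sf in}(s) = 0$, $d^{\sf out}(t) = 0$ and $d^{\sf in}(t) = k$, and the flow-conservation equality $d^{\sf in}(v) = d^{\sf out}(v)$ at every internal vertex $v$ (each edge at $v$ lies on a unique path of the decomposition, which must pair it with an out-edge or in-edge of $v$ on the same path). In particular $\{s\}$ is itself a minimum $st$-cut with no entering edges, supplying the base of the induction.

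For (i), I would argue by contradiction: if some edge $e = (u,v)$ enters a minimum cut $A$, then because every edge of $G$ is critical there is another minimum cut $A'$ with $u \in A'$ and $v \notin A'$ (so $e$ leaves $A'$). The exact directed submodular identity
\[ d^{\sf out}(A) + d^{\sf out}(A') = d^{\sf out}(A \cap A') + d^{\sf out}(A \cup A') + |E(A \setminus A', A' \setminus A)| + |E(A' \setminus A, A \setminus A')| \]
combined with the observation that $A \cap A'$ and $A \cup A'$ both contain $s$ but not $t$ (hence each has out-degree at least $k$), forces the two cross-edge counts on the right to vanish; but $e$ is itself a cross edge from $A' \setminus A$ to $A \setminus A'$, a contradiction. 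Consequence (ii) follows immediately: a directed cycle would contain an edge $e$ leaving some minimum cut $A$, but traversing the rest of the cycle from the head of $e$ (outside $A$) back to its tail (inside $A$) must use some edge entering $A$, contradicting (i).

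Given (i) and (ii), the inductive step is straightforward. Suppose $C_i$ is a minimum $st$-cut with $d^{\sf in}(C_i) = 0$ and $V \setminus C_i \supsetneq \{t\}$. Then $G[V \setminus C_i]$ is a nonempty DAG and admits a source $v_{i+1} \neq t$: if $t$ were the only source, then since $d^{\sf out}(t) = 0$ any source of the nonempty DAG $G[V \setminus (C_i \cup \{t\})]$ would also be a source of $G[V \setminus C_i]$. All in-edges of $v_{i+1}$ then come from $C_i$, giving $d^{\sf in}(C_{i+1}) = 0$; flow conservation at $v_{i+1}$ yields $d^{\sf out}(v_{i+1}) = d^{\sf in}(v_{i+1}) = |E(C_i, v_{i+1})|$, and hence $d^{\sf out}(C_{i+1}) = d^{\sf out}(C_i) + d^{\sf out}(v_{i+1}) - |E(C_i, v_{i+1})| = k$, preserving the invariant. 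Setting $v_n = t$ at the end completes the ordering. The main obstacle is step (i): extracting the ``no entering edges'' property from the minimality hypothesis, where the directed submodular identity on a carefully chosen pair of minimum cuts does the real work.
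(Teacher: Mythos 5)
Your proof is correct, and it reaches the ordering by a route that differs from the paper's in two substantive ways. First, for the key fact that every minimum $st$-cut has no entering edges, the paper uses a one-line flow argument: by minimality $G$ is exactly the union of $k$ edge-disjoint $st$-paths, so every edge carries unit flow, the net flow $k$ across any $st$-cut $R$ equals $d^{\sf out}(R)-d^{\sf in}(R)$, and $d^{\sf out}(R)=k$ forces $d^{\sf in}(R)=0$. You instead derive it from the criticality of every edge together with the submodular identity for directed cuts; the uncrossing does indeed force both cross-edge counts between $A\setminus A'$ and $A'\setminus A$ to vanish, so the argument is valid, just heavier than necessary (your derivation of acyclicity from this fact, by following a cycle back into the cut, is likewise a correct substitute for the paper's flow-cancellation argument). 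Second, and more interestingly, the paper organizes the induction around contraction: it finds a source $z$ of $G\setminus\{s,t\}$, contracts it into $s$, and must then prove that the contracted graph is again minimally $k$-$st$-connected --- this minimality-preservation step (showing $R\cup\{z\}$ is again a minimum cut) is the bulk of the paper's proof. You avoid contraction entirely and maintain the invariant ``$C_i$ is a minimum $st$-cut with no entering edges'' directly, using flow conservation at the newly peeled source to recompute $d^{\sf out}(C_{i+1})=k$. Since the invariant you carry is exactly what the lemma asserts and nothing more, you never need to re-establish minimality of a smaller graph, which makes your inductive step genuinely simpler; your handling of the corner case (ensuring the peeled source is not $t$, via $d^{\sf out}(t)=0$) is also correct.
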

\begin{proof}
For $n=2$ the lemma is trivial so assume that $n \geq 3$. 
We will show that then $G$ has a node $z \in V \setminus\{ s,t\}$ such that the following holds: \\
{\em $s$ is the tail of every edge that enters $z$ and 
the graph obtained from $G$ by contracting $z$ into $s$ and deleting loops 
is also minimally $k$-$st$-connected.} \\
We then define the order recursively as follows. 
We let $G_1=G$ and $v_1=s$. 
For $i=2, \ldots,n-1$ we let $v_i$ be a node $z$ as above of $G_{i-1}$, and $G_i$ is 
obtained from $G_{i-1}$ by contracting $v_i$ into $s$.
It is easy to see that the order is as required, so we only need to prove existence of $z$ as above.

Note that $d_G^{\sf in}(v)=d_G^{\sf out}(v) \geq 1$ for every $v \neq s,t$,
while $d_G^{\sf out}(s)=d_G^{\sf in}(t)=k$ and $d_G^{\sf in}(s)=d_G^{\sf out}(t)=k$.
By considering $G$ as a flow network with unit capacity and unit flow on every edge,
we get the following.
\begin{itemize}
\item
If $R$ is a minimum $st$-cut of $G$ (so $d_G^{\sf out}(R)=k$) then $d_G^{\sf in}(R)=0$;
this is so since for any $st$-cut $R$ the flow value equals to $d_G^{\sf out}(R)-d_G^{\sf in}(R)$.
\item
$G$ is acyclic, since removing a cycle does not affect the flow value, 
hence existence of a cycle contradicts the minimality of $G$.
\end{itemize}

The graph $G \setminus \{s,t\}$ is also acyclic and thus has a node $z$ such that no edge enters $z$,
so $s$ is the tail of every edge that enters $z$.
Let $G'=(V',E')$ be obtained from $G$ by contracting $z$ into $s$. 
It is easy to see that $G'$ is $k$-$st$-connected, so we only have to prove minimality.
For that it is sufficient to prove that for any $e \in E'$ (so $e$ can be any edge of $G$ that is not an $sz$-edge)
there exists a minimum $st$-cut $R$ of $G$ with $e \in \delta^{\sf out}_G(R)$ such that $z \in R$. 

By the minimality of $G$,  there exists a minimum $st$-cut $R$ with $e \in \delta^{\sf out}_G(R)$.
If $z \in R$ then we are done. 
Otherwise, we claim that $R \cup \{z\}$ is also a minimum $st$-cut of $G$.
Note that all edges entering $z$ are in $\delta_G^{\sf out}(R)$ (since their tail is $s$) and 
all edges leaving $z$ have head in $V \setminus R$ (since $G$ has no edges entering $R$). Thus 
$
d^{\sf out}_G(R \cup \{z\})=d_G^{\sf out}(R)-d_G^{\sf in}(z)+d_G^{\sf out}(z)=k
$,
concluding the proof. 
\end{proof}

\begin{lemma} \label{l:bound}
Let $v_1, \ldots, v_n$  be an ordering of nodes of a directed simple graph $G=(V,E)$ 
and let  $C_i=\{v_1, \ldots v_i\}$ for $i=1, \ldots, n-1$. 
If $d_G^{\sf out}(C_i) \leq k$ and $d_G^{\sf in}(C_i) =0$ for all $i$ then $|E| \leq \sqrt{2k} n$. 
\end{lemma}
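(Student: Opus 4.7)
My plan is to use the hypothesis to view the ordering $v_1, \ldots, v_n$ as a topological order of $G$: since no edges enter any $C_i$, every edge $(v_a, v_b) \in E$ has $a < b$. An edge $(v_a, v_b)$ contributes to $d_G^{\mathsf{out}}(C_i)$ for precisely the indices $a \leq i < b$, so the assumption $d_G^{\mathsf{out}}(C_i) \leq k$ telescopes into the key identity
\[
\sum_{(v_a,v_b) \in E} (b - a) \;=\; \sum_{i=1}^{n-1} d_G^{\mathsf{out}}(C_i) \;\leq\; k(n-1).
\]
This bounds the total ``jump length'' of the edges.

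Next I would lower bound the same sum using simplicity. Let $a_j = d_G^{\mathsf{in}}(v_j)$, so that $\sum_j a_j = |E|$. Because $G$ is simple and acyclic, the in-neighbors of $v_j$ are $a_j$ distinct nodes chosen from $\{v_1, \ldots, v_{j-1}\}$; hence their contribution to the jump-length sum is at least
\[
\sum_{v_a \to v_j} (j - a) \;\geq\; 1 + 2 + \cdots + a_j \;=\; \binom{a_j + 1}{2},
\]
the minimum being attained when the in-neighbors are the $a_j$ immediate predecessors of $v_j$. Summing over $j$ and comparing with the previous identity gives $\sum_j a_j(a_j + 1) \leq 2k(n-1)$, and therefore $\sum_j a_j^2 \leq 2kn$.

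To finish, I would apply Cauchy--Schwarz to pass from this $\ell_2$-bound on the in-degrees to an $\ell_1$-bound on $|E|$:
\[
|E|^2 \;=\; \Bigl(\sum_{j=1}^n a_j\Bigr)^{\!2} \;\leq\; n \sum_{j=1}^n a_j^2 \;\leq\; 2kn^2,
\]
so $|E| \leq \sqrt{2k}\, n$ as claimed.

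The conceptual heart of the argument is the convexity step in the middle paragraph: the ``global'' linear bound on total edge length must be redistributed ``locally'' at each vertex, and simplicity is what forces the $a_j$ incoming edges at $v_j$ to spread out over $a_j$ distinct predecessors, upgrading a linear bound in $n$ on jump lengths into a quadratic bound $\sum a_j^2 = O(kn)$ on in-degrees. That is the only place where the simplicity hypothesis is used, and it is the step I expect to be the main obstacle to identify; once it is in place, Cauchy--Schwarz mechanically delivers the exact constant $\sqrt{2k}$.
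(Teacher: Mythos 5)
Your proof is correct, and while it opens exactly as the paper does, its second half takes a genuinely different route. Both arguments begin with the same length identity: every edge goes forward in the ordering (else it would enter some $C_i$), and an edge $v_av_b$ leaves precisely the cuts $C_a,\ldots,C_{b-1}$, so $\sum_{e}l(e)=\sum_{i=1}^{n-1}d_G^{\sf out}(C_i)\le k(n-1)$ with $l(v_av_b)=b-a$. From there the paper argues globally: among all simple forward edge sets with total length at most $kn$, the largest is obtained greedily by taking all edges of length $1,2,\ldots,\ell$ plus some edges of length $\ell+1$, and an explicit (somewhat delicate) computation shows this extremal configuration has at most $\sqrt{2k}\,n$ edges. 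You instead argue locally at each vertex: simplicity forces the $a_j=d^{\sf in}(v_j)$ incoming edges at $v_j$ to have distinct lengths, so their total length is at least $\binom{a_j+1}{2}$, whence $\sum_j a_j^2\le\sum_j a_j(a_j+1)\le 2k(n-1)\le 2kn$, and Cauchy--Schwarz gives $|E|^2\le n\sum_j a_j^2\le 2kn^2$. Your version buys a shorter and fully rigorous derivation of the same constant $\sqrt{2k}$ (the paper's exchange step ``shortest edges first'' is asserted rather than proved, and its closing inequalities require checking), at the cost of not exhibiting the extremal configuration; the paper's computation, by contrast, identifies exactly which edge sets are tight, which matches the near-tight example $G_\ell$ it constructs afterwards. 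Both arguments use simplicity in the same essential place and both restrict verbatim to induced subgraphs $G[U]$, as needed for Lemma~\ref{l:edp}.
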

\begin{proof}
Let us define the {\bf length} $l(e)$ of an edge $e=v_iv_j$ to be $j-i$. 
Note that  the total length of $E$ is bounded by 
$$
l(E)=\sum_{e \in E} l(e)=\sum_{i=1}^{n-1} d_G^{\sf out}(C_i) \leq k(n-1) < kn \ .
$$
We claim that any simple edge set $E$ with $l(E) \leq kn$ has at most $\sqrt{2k} n$ edges.
Clearly, a maximum size edge set is obtained by picking the shortest edges first, 
say an edge set $E_\ell$ of all edges of length at most $\ell$ and some $i \leq n-(\ell+2)$ edges of length $\ell+1$.
Note that there are exactly $n-j$ edges of length $j$ of total length $j(n-j)$. 
Thus the size and the total length of $E_\ell$ are:
\begin{eqnarray*}
|E_\ell|  & = & \sum_{j=1}^{\ell} (n-j)= n\ell - \frac{\ell(\ell+1)}{2}= \ell\left(n-\frac{\ell+1}{2}\right) \\
l(E_\ell) & = & \sum_{j=1}^\ell j(n-j) =  
n \frac{\ell(\ell+1)}{2} - \frac{\ell(\ell+1)(2\ell+1)}{6} =
\frac{\ell(\ell+1)}{2} \left(n-\frac{2\ell+1}{3} \right) 
\end{eqnarray*}
So $\ell$ is the largest integer such that $l(E_\ell) \leq kn$ 
and $i$ is the largest integer such that $i(\ell+1) \leq kn-l(E_\ell)$. 
The maximum possible number of edges is $m^*=\ell\left(n-\frac{\ell+1}{2}\right)+i$, therefore:
$$
\frac{|E|}{n} \leq \frac{m^*}{n}=\ell\left(1-\frac{\ell+1}{2n}\right)+\frac{i}{n} \ .
$$
Since $l(E_\ell) +i(\ell+1) \leq kn$ we have:
$$
\ell(\ell+1)\left(1-\frac{2\ell+1}{3n}\right) +\frac{2i(\ell+1)}{n} \leq 2k
$$
Thus to prove that $(m^*/n)^2 \leq 2k$ it is sufficient to prove that:
$$
\left(\ell\left(1-\frac{\ell+1}{2n}\right)+\frac{i}{n}\right)^2 \leq \ell(\ell+1)\left(1-\frac{2\ell+1}{3n} \right)+\frac{i}{n} \cdot 2(\ell+1) \ .
$$
Easy computations show that (recall that $i < n$):
$$
\left(1-\frac{\ell+1}{2n}\right)^2  < 1-\frac{2\ell+1}{3n} \ \ \ \ \ \ \ \ 
\ell\left(1-\frac{\ell+1}{2n}\right) + \frac{i}{2n} < \ell+1 \ .
$$
Thus we get:
\begin{eqnarray*}
\left(\ell\left(1-\frac{\ell+1}{2n}\right)+\frac{i}{n}\right)^2 
&   =   & 
\ell^2\left(1-\frac{\ell+1}{2n}\right)^2+\frac{i}{n}\left(2\ell\left(1-\frac{\ell+1}{2n}\right) + \frac{i}{n}\right) \\
& < &
\ell(\ell+1)\left(1-\frac{2\ell+1}{3n} \right)+\frac{i}{n} \cdot 2(\ell+1) \ ,
\end{eqnarray*}
as required.
\end{proof}

Let now $G$ and $v_1, \ldots,v_n$ be as in Lemma~\ref{l:bound}. 
Note that for any node subset $U \subseteq V$, 
the subsequence $u_1, \ldots,u_{|U|}$ of $v_1, \ldots,v_n$ of the nodes in $U$
and the subgraph $G[U]=(U,E_U)$ induced by $U$ 
satisfy the conditions of Lemma~\ref{l:bound},
implying that $|E_U| \leq \sqrt{2k} |U|$. 

\medskip

This concludes the proof of Lemma~\ref{l:edp}, and thus also the proof of Theorem~\ref{t3} is complete. 

\medskip \medskip

\noindent {\bf Acknowledgment.}
I thank an anonymous referee for noticing an error in one of the versions of this paper.
I also thank several anonymous referees for useful comments that helped to improve the presentation.


\end{document}